\theoremstyle{definition}
\newtheorem{definition}{Definition}
\theoremstyle{remark}
\definecolor{quantumviolet}{HTML}{53257F} 
\definecolor{quantumgray}{HTML}{555555} 
\definecolor{mygray}{gray}{0.95} 
\newtcolorbox[auto counter,number within=section]{boxfigure}[2][]{%
colback=mygray,colframe=quantumviolet,fonttitle=\bfseries,width=\textwidth,float*=ht,lower separated=false, halign=justify,title=Box~\thetcbcounter: #2,#1}
\definecolor{commentsColor}{rgb}{0.497495, 0.497587, 0.497464}
\definecolor{keywordsColor}{rgb}{0.000000, 0.000000, 0.635294}
\definecolor{stringColor}{rgb}{0.558215, 0.000000, 0.135316}
\tiny\color{commentsColor}, 
\tikzset{>=latex} 
\colorlet{myred}{red!85!black}
\colorlet{myblue}{blue!80!black}
\colorlet{mycyan}{cyan!80!black}
\colorlet{mygreen}{green!70!black}
\colorlet{myorange}{orange!90!black!80}
\colorlet{mypurple}{red!50!blue!90!black!80}
\colorlet{mydarkred}{myred!80!black}
\colorlet{mydarkblue}{myblue!80!black}
\tikzstyle{xline}=[myblue,thick]
\tikzstyle{myarr}=[myblue!50,-{Latex[length=3,width=2]}]
\tikzset{
	operator/.append style={fill=purple!20},
	my label/.append style={above right,xshift=0.3cm},
	phase label/.append style={label position=above}
}
\definecolor{arsenic}{rgb}{0.23, 0.27, 0.29}
\definecolor{fluxcolor}{RGB}{204, 217, 255}
\definecolor{uwavecolor}{RGB}{244, 220, 222}
\definecolor{FandUwavecolor}{RGB}{231,244,224}
\definecolor{cavitycolor}{RGB}{232, 200, 244}
\definecolor{orangeb}{rgb}{0.99,0.78,0.07}
\definecolor{orangebdark}{rgb}{0.99,0.78,0.17}
\definecolor{livingcoral}{HTML}{FA7268}			
\definecolor{ultraviolet}{HTML}{5F4B8B}			
\definecolor{greenery}{HTML}{88B04B}			
\definecolor{radiantorchid}{HTML}{AD5E99}		
\definecolor{tangerinetango}{HTML}{DD4124}		
\definecolor{commentsColor}{rgb}{0.497495, 0.497587, 0.497464}
\definecolor{keywordsColor}{rgb}{0.000000, 0.000000, 0.735294}
\definecolor{stringColor}{rgb}{0.558215, 0.000000, 0.135316}
\definecolor{carrotorange}{rgb}{0.93, 0.57, 0.13}
\tikzstyle{input}=[draw,fill=red!50]
\tikzstyle{conv}=[draw,fill=black!20]
\tikzstyle{max}=[draw,dashed,fill=black!10]
\tikzstyle{dropout}=[draw,dashed,fill=blue!10]
\tikzstyle{fc}=[draw,fill=green!10]
\tikzstyle{output}=[draw,fill=red!50]
\definecolor{BlueLUH}{cmyk}{1.0,0.7,0,0}
\colorlet{LightBlue}{BlueLUH!20!white}
\colorlet{DarkBlue}{BlueLUH!80!black!20}
\colorlet{PinKish}{red!60}
\setlist{nosep,leftmargin=\leftmargin/2}
\theoremstyle{definition}
\newtheorem{Theorem}{Theorem}
\newtheorem{example}{Example}[section]
\DeclareRobustCommand{\numbercircle}[1]{%
	\tikz[baseline=(char.base)]{
		\node[shape=circle,draw,inner sep=2pt,fill=gray!20] (char) {\small #1};}%
}
\begin{document}

\index{Entanglement}

\title{$\langle$QonFusion$\rangle$- Quantum Approaches to Gaussian Random Variables: Applications in Stable Diffusion and Brownian Motion.}

\author{\raisebox{-0.2em}{\textcolor{DarkBlue}{\rule{0.8em}{0.8em}}} Shlomo Kashani \textsuperscript{$\Phi$}}
\affiliation{Applied Physics, Johns Hopkins University, Maryland U.S.A.: \url{skashan2@jhu.edu} }


\maketitle
\onecolumn
\begin{abstract}
In the present study, we delineate a strategy focused on non-parametric quantum circuits for the generation of Gaussian random variables (GRVs). This quantum-centric approach serves as a substitute for conventional pseudorandom number generators (PRNGs), such as the \textbf{torch.rand} function in PyTorch. The principal theme of our research is the incorporation of Quantum Random Number Generators (QRNGs) into classical models of diffusion. Notably, our Quantum Gaussian Random Variable Generator fulfills dual roles, facilitating simulations in both Stable Diffusion (SD) and Brownian Motion (BM). This diverges markedly from prevailing methods that utilize parametric quantum circuits (PQCs), often in conjunction with variational quantum eigensolvers (VQEs). Although conventional techniques can accurately approximate ground states in complex systems or model elaborate probability distributions, they require a computationally demanding optimization process to tune parameters. Our non-parametric strategy obviates this necessity. To facilitate assimilating our methodology into existing computational frameworks, we put forward QonFusion, a Python library congruent with both PyTorch and PennyLane, functioning as a bridge between classical and quantum computational paradigms. We validate QonFusion through extensive statistical testing, including tests which confirm the statistical equivalence of the Gaussian samples from our quantum approach to classical counterparts within defined significance limits. QonFusion is available at \url{https://boltzmannentropy.github.io/qonfusion.github.io/} to reproduce all findings here.

\end{abstract}

\setcounter{tocdepth}{4}
\tableofcontents


\onecolumn
\section{Introduction}

\subsection{The Convergence of Quantum Computing and Generative AI models}

Our investigations into QRNGs reveal a broad spectrum of potential applications. Chief among these is the ability for QRNGs to serve as direct replacements for traditional random number generators like \texttt{torch.rand} or \texttt{numpy.rand}, which are routinely employed in the realm of deep learning. Furthermore, QRNGs are effective in introducing Gaussian noise during the forward pass of SD models. They also offer valuable perspectives for simulating reversible Stochastic Differential Equations (SDEs), such as BM. A noteworthy discovery, referenced in section (\ref{sec:brownian}), is the finding that even a minimal deviation from a zero mean in the Gaussian Probability Density Function (PDF) during a BM simulation via QRNG results in irreversibility. This utility is especially relevant in the current scientific landscape where deep learning intersects extensively with quantum many-body systems, particularly in high-energy physics \cite{song2020score}. Simultaneously, the interface between quantum computing and quantum neural networks is attracting substantial academic interest.

While these research directions might initially appear disparate, they inherently exhibit complementary attributes, particularly in the realm of generative SD models \cite{gili2022evaluating}. Foundational contributions from Dhariwal et al. \cite{dhariwal2021diffusion} and Moghadam et al. \cite{moghadam2022morphology} have highlighted the efficacy of diffusion models in the creation of both general-purpose and medical images. Such models present a viable substitute for the labor-intensive and economically taxing process of manual tumor annotation.

Given these insights, an imperative question arises: Is it feasible to substitute all elements of a classical Machine Learning (ML) diffusion model with their quantum equivalents? The answer, although layered, is primarily negative. As a case in point, consider the U-Net architecture \cite{Ronneberger2015}, renowned for its original purpose in image segmentation. It has since been adapted to perform denoising functions within diffusion models. This dual-functionality architecture, comprising both an encoder and a decoder, can generate noise-highlighting masks that can be integrated into specific denoising algorithms. Yet, when one peels back the layers of the quantum realm, it quickly becomes apparent that the field is less developed than its classical counterpart.

To bridge our initial foray into QRNGs with the more nuanced world of Quantum Machine Learning (QML), it's crucial to establish the state-of-play in both the classical and quantum settings. QML, unlike its classical ML analogue, lacks well-defined quantum versions of key architectural elements such as encoders and decoders. This shortfall presents a significant barrier to the complete transition from classical to quantum frameworks in diffusion models. Consequently, our current research aims are situated in the early stages of classical diffusion models, specifically focusing on incorporating and simulating basic quantum circuits for QRNGs, as illustrated in Figure 1 \cite{franzese2022how,cao2023exploring}. Though this focus might appear limiting, it serves as a gateway to a broader, yet largely unexplored, landscape of opportunities.

Most of our experiments are conducted on classical computing platforms, harnessing the power of PennyLane's quantum simulator backends to serve as the computational engines for our quantum circuits. Optimised for computational efficiency, these backends are well-suited for numerical simulations in quantum computing, particularly those pertinent to SD or BM. Finally, it's important to note that our chosen default parameters are not intended to represent optimal configurations but rather to serve as a reference framework for evaluating Gaussian Random Variables (GRVs). A more detailed account of our research methodology will follow in subsequent sections of this paper.

In light of these developments, it becomes pertinent to explore how quantum computing can further enrich the field of generative models, particularly diffusion models. The following sections delve into this subject, examining the foundational principles of diffusion models and their potential for quantum augmentation.

\begin{figure}[h!]
	\centering
	\includegraphics[width=17.5cm]{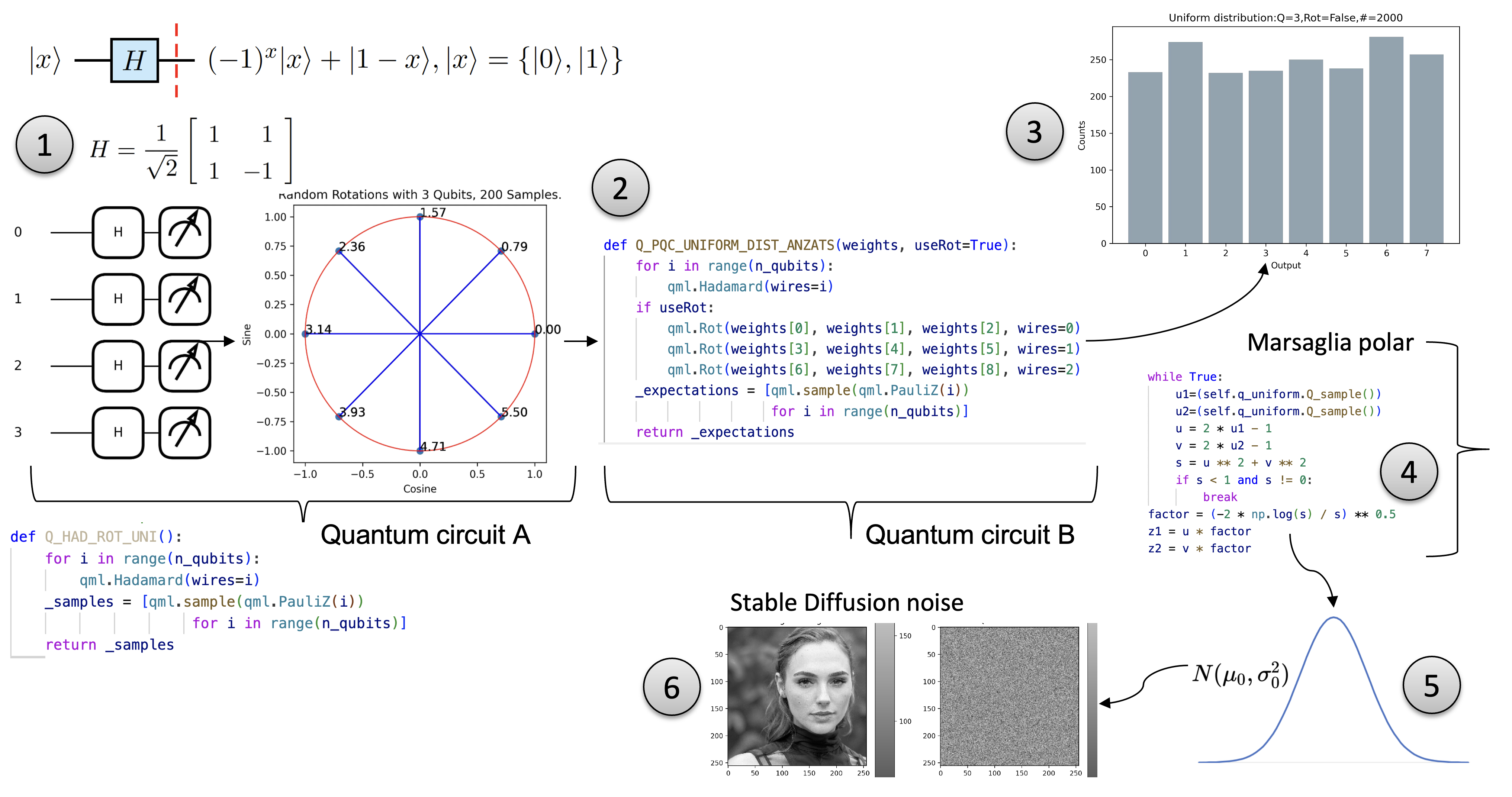}
	\caption[An illustrative Depiction of our Proposed Quantum Pipeline.]{An illustrative depiction of our proposed quantum pipeline. \numbercircle{1} Utilizing $N$ qubits, an initial non-parametric (\ref{sec:non-parametric}) quantum circuit, founded on Hadamard transformations, engenders a uniform distribution (\ref{sec:disc-uniform}) across $2^N$ discrete values through sampled measurement outcomes, rather than expectation values (further elucidated in \ref{fig:static_pqc_ansatz_sampling}). These outcomes are subsequently transmuted into equidistant angles, which are then channeled into a static quantum circuit that neither adapts nor learns, with the purpose of inducing stochastic rotations. \numbercircle{2} This static assembly, encompassing $M>N$ qubits, functions as follows: a Hadamard gate is applied to each qubit, thus creating a superposition of states. Upon activation of the \texttt{useRot} flag, the circuit administers a rotation (\ref{sec:rots}) to every qubit, utilizing the \textbf{previously generated angles as the parameters for these rotations}. The output \numbercircle{3} from this circuit is then employed in the \numbercircle{4} Marsaglia polar method to transmute the uniform distribution into a zero-mean Gaussian \numbercircle{5}. This transformation is achieved through the equation $Z = \sqrt{-2\log U}\cos(2\pi V)$, where $U$ and $V$ are two uniform pseudo-random numbers (PRN's), and by substituting our quantumly-generated uniform random bits for $U$ and $V$, we synthesize a source of quantum (\ref{sec:guss}) Gaussian random variables.  \numbercircle{6} The Marsaglia polar method serves as a paradigmatic example of how quantum random bits can seamlessly supplant classical PRN's.}
	\label{fig::our_pipeline}
\end{figure}

\subsection{A general overview of our pipeline}
In the hybrid quantum-classical framework we describe, as illustrated in Figure~\ref{fig::our_pipeline}, we make use of \textbf{two} individual quantum circuits in a sequential manner. The first circuit, which utilises \( N \) qubits, aims to produce a uniform distribution over \( 2^N \) unique states. These states are subsequently converted into equiangular values. These equiangular values are then fed into the second, pre-determined quantum circuit. This latter circuit, containing \( M > N \) qubits, applies a Hadamard gate to each qubit and carries out rotations as determined by the angles generated in the first circuit. The output from this series of quantum operations is then used in the classical Marsaglia polar method to transform the uniform distribution into a Gaussian distribution centred at zero. This serves as an illustrative example of how QRNGs can replace classical PRNGs in models like SD or BM, particularly during the forward diffusion stage. The numerical trials are conducted on idealised quantum simulators running on classical hardware, thereby approximating the essential elements of diffusion methodologies.

As we transition from the quantum-classical hybrid framework to a more focused discussion on diffusion models, it is crucial to remember that our ultimate aim is to explore the quantum-classical interface. The subsequent section on diffusion models should be viewed as an extension of this aim, offering a detailed background against which our quantum contributions can be more fully understood.

\subsection{Diffusion Models}
Our work is fundamentally anchored in the generation of pseudo-random sequences, a critical aspect of diffusion techniques that often takes up a significant portion of the computational resources. While this is a focal point of our study, it's worth noting that the applications of SD models are manifold, including, but not limited to, the realms of speech, image, and audio generation, as well as denoising techniques \cite{AbuHussein2022, Fishman2023, Yuan2022, Bai2023, Baas2022}. Given the extensive scope of these applications, a complete discussion in this article would be impractical.

\begin{figure}[h!]
	\centering
	\includegraphics[width=12.5cm]{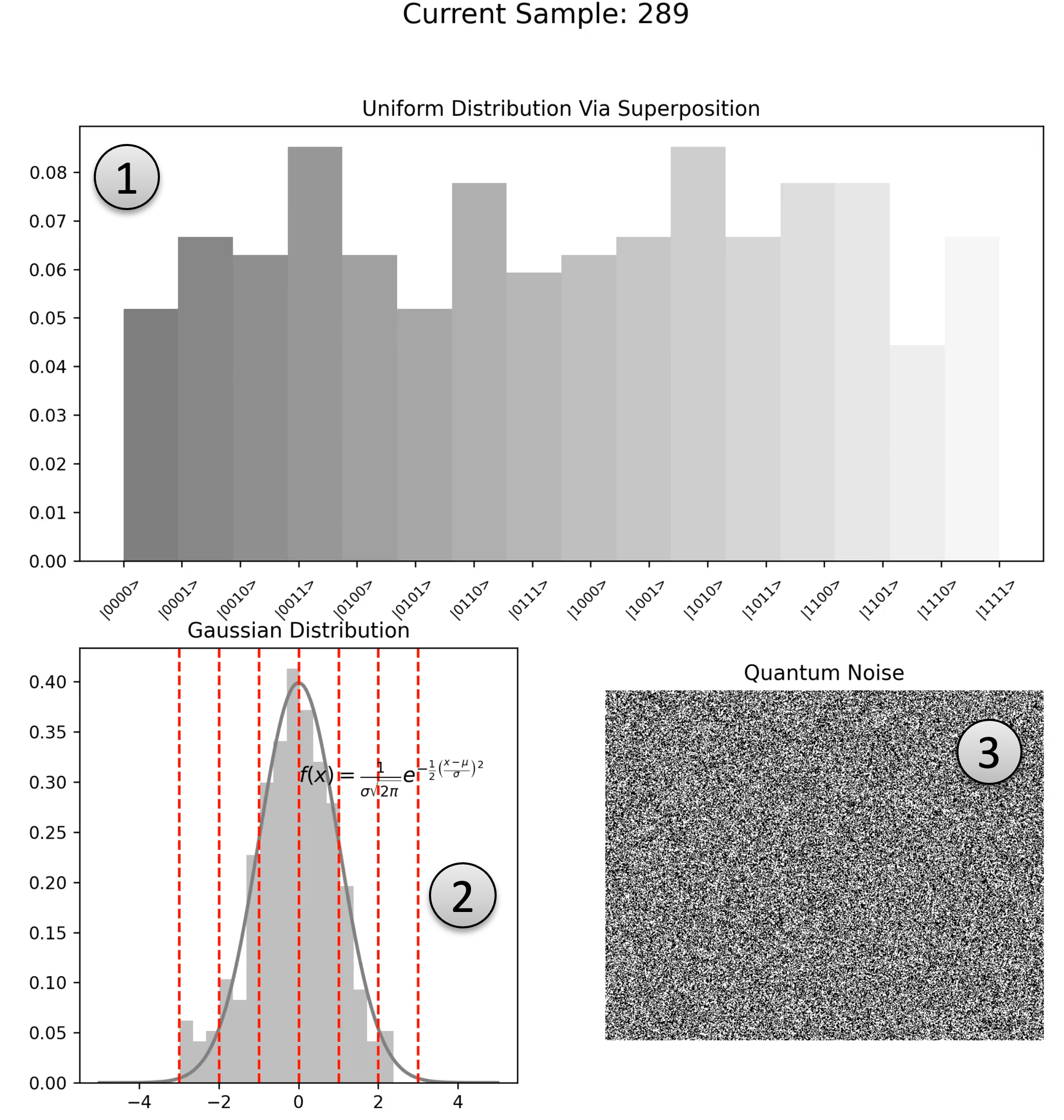}
	\caption[Quantum-Generated Gaussian Noise in Stable Diffusion Processes]{An illustrative schema elucidating the employment of quantum-sourced uniform random variables for the generation of Gaussian noise, specifically for the perturbation of images in a Stable Diffusion context. \numbercircle{1} Presents a histogram of the quantum-derived uniform random variables. \numbercircle{2} Demonstrates the Gaussian distribution fitted to the uniform variables, facilitated by the Marsaglia polar method. \numbercircle{3} Exhibits an image in two dimensions that has been disrupted by quantum-originated Gaussian noise.}
	\label{fig::sd_quant}
\end{figure}

Figure \ref{fig::sd_quant} captures a singular moment from an animated sequence that visualizes the operational stages of the QRNG. The complete animation can be viewed at \url{https://boltzmannentropy.github.io/qonfusion.github.io/}. This visualization serves as a practical illustration of \textbf{our central research theme: the integration of QRNGs into classical diffusion models}. While the field of diffusion models is vast and rich in applications, our focus remains on identifying areas where quantum computing, particularly QRNGs, can make a meaningful impact.

\subsubsection{Stable Diffusion}
Building on the theme of QRNG integration, the domain of diffusion models is rich in seminal contributions. Works by Sohl-Dickstein et al. \cite{Sohl-Dickstein2015} and Song et al. \cite{Song2019, Song2020, Song2021} stand as milestones in establishing the theoretical underpinnings of this field. Their groundbreaking research has set the stage for the work by Ho et al. \cite{Ho2020}, which offers a unified, practical framework that has become the go-to standard for multiple implementations. For readers seeking a deeper theoretical dive, the studies by Song et al. \cite{Song2020, Song2021} are invaluable. On the mathematical front, Luo et al. \cite{Luo2022} have compiled essential formulations for anyone interested in a comprehensive understanding of diffusion models. Furthermore, for an overview of the diversity of applications, one cannot overlook the work by Yang et al. \cite{Yang2022}.

\subsubsection{Brownian Motion}
\label{sec:brownian}
Having established the foundational principles and key contributions in the domain of SD models, it is instructive to delve into a specific, yet seminal, stochastic process that often serves as the underpinning of a different group of diffusion models. This leads us to the discussion of Brownian Motion, a process that not only has historical significance but also offers a robust mathematical framework for understanding diffusion dynamics.  BM is a cornerstone of stochastic processes and is unique for being the first to incorporate continuous time and state variables, thus leaving an indelible mark on subsequent research, notably in Gaussian processes, martingales, and Markov processes. Initially brought to light by Einstein \cite{Einstein1905}, BM provides a robust framework for the analytical representation of the random motion of particles in fluids, a subject further explored by Perrin \cite{Perrin1908}, Chandrasekhar \cite{Chandrasekhar1943}, and Langevin \cite{Langevin1908}. The process is governed by stochastic differential equations and, distinctively, employs Gaussian-distributed stochastic variables for its diffusion dynamics. This stands in contrast to other stochastic processes, which often resort to Poisson or Bernoulli distributions \cite{Gardiner2004, vanKampen2007}.

The reason we are testing our QRNG in BM is that BM is highly sensitive to any deviations in the mean of the Gaussian distribution, and this serves as a benchmark for our quantum generator. Our findings concerning the application of our QRNG to Brownian Motion simulations are detailed in Section \ref{res:brownian}, entitled \textit{Quantum-Augmented Gaussian Fluctuations in Simulations of Brownian Motion}.

%

\section{The Fundamentals of Classical Pseudo-Random Number Generation}
\label{sec:classical}
As a prelude to our exploration of quantum-generated random numbers, it is instructive to delineate the foundational methods employed in classical computing to produce uniform and Gaussian distributions \cite{AhrensDieter1972}. Classical Pseudo-Random Numbers (PRNs) are generated using deterministic algorithms that offer repeatability while producing a sequence of numbers that seemingly fall within the random range of \( (0,1) \) \cite{brent2010longperiod,brent1974gaussian}. The effectiveness of such generators is gauged through two primary criteria:

\begin{enumerate}
	\item \textbf{The Length of the Period}: This measure signifies the duration before the generated sequence repeats. Period lengths below \( 2^{32} \) are generally considered insufficient, and a minimum period length of \( 2^{40} \) is usually recommended for robust applications. For further insights into the nuances of such sequences, refer to the works by \cite{box1958note,brent1974gaussian}.
	\item \textbf{Robustness in Statistical Terms}: Any generator worth its salt will have undergone meticulous statistical evaluation to ascertain its quality of 'randomness'. Software that passes these stringent tests is considered reliable.
\end{enumerate}

Moving forward, we delve into the techniques used to generate numbers following a non-uniform, specifically Gaussian, distribution. The mathematical expression for the probability density function (PDF) of such a Gaussian distribution can be articulated as:

\begin{equation}
	P(x) = \exp\left(-\frac{(x - x_0)^2}{2 \sigma^2}\right)
\end{equation}

The Central Limit Theorem comes to the fore here. The theorem posits that the summation of a large number of independent random variables tends to a Gaussian distribution, with its width inversely related to \(\sqrt{N}\). Therefore, it's not a stretch to infer that by summing \( n \) uniform random numbers, one can approximate a Gaussian distribution. The closer \( n \) is to a large number, the more faithfully the resulting distribution will mirror a Gaussian form. To customise this distribution with a particular mean \(\bar{x}\) and width \(\sigma\), the following transformation of the summation \( S \) of \( n \) uniform random variables can be employed:

\begin{equation}
	x = \bar{x} - 2 \sigma \left(\frac{S}{n} - \frac{1}{2}\right) \sqrt{3}
\end{equation}

It's worth mentioning that the traditional approach of approximating a Gaussian distribution through the summation of \( n \) uniform random variables is not computationally efficient. This is because generating a single Gaussian Random Variable (GRV) \cite{brent2010longperiod,billard1985computer,brent1974gaussian} requires the prior generation of \( n \) uniform random variables within the range \( (0,1) \). While classical methods for generating these uniform random variables abound, our attention is rather riveted on how to transform quantum-generated uniform random numbers into GRVs via classical computational techniques. Here, we elucidate four key methods for this purpose:

\begin{enumerate}
	\item \textbf{The Box-Muller Technique} \label{eq:box_muller}: This method stands out for its computational efficiency and is well-adopted in practice. It takes two independent uniform random variables as inputs and produces two independent standard GRVs as outputs \cite{Michelen2018}.

	\item \textbf{Marsaglia's Polar Approach} \label{eq:marsaglia_polar}: This is essentially a variant of the Box-Muller method but avoids the use of trigonometric functions, offering a quicker alternative in certain scenarios \cite{Hwang2013}.

	\item \textbf{Marsaglia's Ziggurat Method} \label{eq:marsaglia_ziggurat}: Known for its efficiency in generating normal and exponential random variables, this method employs precomputed tables, making it especially useful for large-scale simulations \cite{Nadler2006}.

	\item \textbf{Inverse Cumulative Distribution Function (CDF) Transformation} \label{eq:inverse_cdf}: Though computationally more demanding, this method has the flexibility to handle any distribution with a known inverse CDF. It transforms a uniform random variable into a GRV through the inverse of the CDF \cite{Yahav2007}.
\end{enumerate}

\subsection{The Marsaglia Polar Method Explained}
\label{sec:classicalgauss}
GRVs play a pivotal role in the forward phase of SD models, serving as the source of Gaussian noise introduced into images. These variables find extensive utility not only in machine learning but also in the physical sciences. A standard GRV, denoted \( Z \), has a mean \( \mu \) of zero and a variance \( \sigma^2 \) of one. Its Probability Density Function (PDF) and Cumulative Distribution Function (CDF) are described as follows:

\begin{align}
	\phi(z) &= \frac{1}{\sqrt{2 \pi}} \exp \left(-\frac{1}{2} z^2\right), \label{eq:pdf} \\
	\Phi(z) &= \mathbb{P}[Z < z] = \int_{-\infty}^{z} \phi(s) \, \mathrm{d} s. \label{eq:cdf}
\end{align}

The Marsaglia method, which can be viewed in Algorithm \ref{alg::marsaglia}, commences by generating two uniform random variables, \( u1 \) and \( u2 \). These are subsequently transformed to lie within the interval \([-1, 1]\), creating \( u \) and \( v \). Here, \( u \) and \( v \) act as Cartesian coordinates on a two-dimensional plane. The algorithm's crux is to ascertain if the point \((u, v)\) resides within a unit circle. This is determined by calculating \( s = u^2 + v^2 \) and ensuring that \( s < 1 \) and \( s \neq 0 \). Should the point fall outside the unit circle, a new set of \( u \) and \( v \) is generated. Once a valid point is identified, a scaling factor is computed using \( s \), and this factor is applied to \( u \) and \( v \) to produce \( z1 \) and \( z2 \), two normally distributed variables. The algorithm then outputs \( z2 \). Herein, we offer an exhaustive account of the Marsaglia method's inner workings.

\begin{algorithm}
	\caption{The Marsaglia Polar Method for Producing Normally Distributed Random Variables}
	\begin{algorithmic}[1]
		\State Initialise \( u1, u2, u, v, s, z1, z2 \)
		\While{True}
		\State \( u1, u2 \sim U(0, 1) \) \Comment{Quantum uniform random number generator}
		\State \( u = 2u1 - 1 \)
		\State \( v = 2u2 - 1 \)
		\State \( s = u^2 + v^2 \)
		\If{\( s \geq 1 \) or \( s = 0 \)}
		\State Continue \Comment{Revert to step one}
		\Else
		\State \( z1 = u \sqrt{-2 \ln s / s} \)
		\State \( z2 = v \sqrt{-2 \ln s / s} \)
		\State \textbf{break}
		\EndIf
		\EndWhile
		\State \Return \( z2 \)
	\end{algorithmic}
\label{alg::marsaglia}
\end{algorithm}

\section{Quantum Random Number Generators (QRNGs)}
\label{sec:qrng}

While the prevailing literature often emphasises Parametric Quantum Circuits (PQCs) \cite{mcclean2016theory, farhi2014quantum} and variational quantum eigensolvers (VQEs) \cite{peruzzo2014variational, kandala2017hardware}, our approach diverges by adopting a non-parametric methodology, obviating the need for hybrid optimisation strategies. For completeness, we first expound upon the parametric methods.

\subsection{Parametric Quantum Circuits (PQCs)}
\label{sec:pqc}
The quantum circuitry employed for generating GRVs has garnered significant attention, particularly within the context of Quantum Circuit Born Machines (QCBMs) \cite{cheng2018quantum}. A recurring theme in current methodologies is the integration of quantum and classical systems. Romero and Aspuru-Guzik \cite{romero2019variational}, for instance, employed a variational quantum generator within a Generative Adversarial Network (GAN), with the quantum circuit functioning as the generator and a classical neural network acting as the discriminator. Similarly, works by Gili et al. \cite{gili2023generative} and Liu and Wang \cite{liu2018differentiable} have further enriched the landscape of quantum-classical hybrid systems in the context of probability distribution learning and optimisation.

\subsection{Non-Parametric Quantum Circuits}
\label{sec:non-parametric}
Our methodology, depicted in Figure~\ref{fig::our_pipeline}, centres around non-parametric quantum circuits \cite{biamonte2017,nielsen_chuang_2010,benedetti2019,qsvt}. It specifically aims at transforming uniform distributions to Gaussian noise, with the details elaborated upon subsequently. The quantum-enhanced implementation of our algorithm, delineated in Code snippet~\ref{lst:quantum_marsaglia}, is modelled on Algorithm~\ref{alg::marsaglia}.

\begin{lstlisting}[language=Python, caption={Quantum-Enhanced Marsaglia Algorithm for Gaussian Noise Generation}, label=lst:quantum_marsaglia, captionpos=b]
	while True:
		u1 = (self.q_uniform.Q_sample())
		u2 = (self.q_uniform.Q_sample())
		u = 2 * u1 - 1
		v = 2 * u2 - 1
		s = u ** 2 + v ** 2
	if s < 1 and s != 0:
		break
	factor = (-2 * np.log(s) / s) ** 0.5
	z1 = u * factor
	z2 = v * factor
	return torch.tensor(z1, dtype=float), torch.tensor(z2, dtype=float)
\end{lstlisting}

\subsection{Generation of Uniform Random Numbers}
\label{sec:disc-uniform}

Figure~\ref{fig:static_pqc_ansatz_sampling} portrays a quantum circuit designed for generating random rotations. These rotations, functioning as non-trainable parameters, are subsequently utilised in a separate ansatz to generate uniform distributions. The quantum noise emanating from these circuits is employed in image corruption tasks and the simulation of diffusion processes akin to Brownian motion.

\begin{figure}[H]
	\begin{minipage}[b]{.21\textwidth}
		\centering
		\includegraphics[width=\textwidth]{./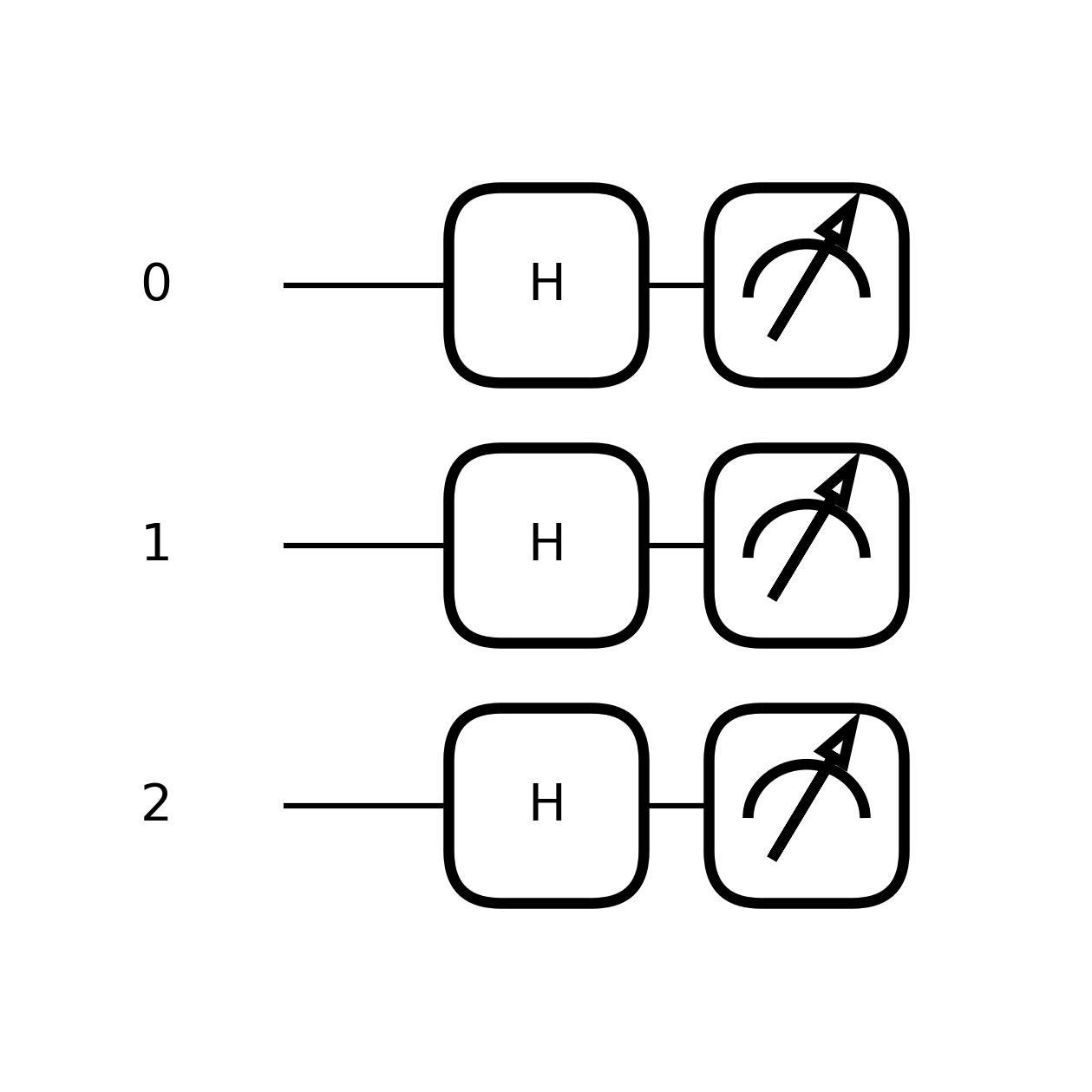}
		(a)
	\end{minipage}
	\begin{minipage}[b]{.45\textwidth}
		\begin{lstlisting}[caption={},language=Python]
			def Q_PQC_ROT_ANZATS(n_qubits):
			for i in range(n_qubits):
			qml.Hadamard(wires=i)
			_expectations = [qml.sample(qml.PauliZ(i)) for i in range(n_qubits)]
			return _expectations
		\end{lstlisting}
		(b)
	\end{minipage}
	\caption[A Simplified Circuit Ansatz for Generating Random Rotations]{A Simplified Circuit Ansatz for Generating Random Rotations: (a) Circuit Ansatz visualisation, (b) Corresponding Python code snippet.}
	\label{fig:static_pqc_ansatz_sampling}
\end{figure}

\subsection{Sampling vs. Measuring Expectation Values}
\label{sec:sample_exp}

In quantum computing, \textbf{sampling and calculating expectation values} are distinct yet interconnected concepts. Much like expectation values, sampling plays a crucial role in quantum algorithms and simulations. While \texttt{qml.expval()} computes expectation values without collapsing the wavefunction, \texttt{qml.sample()} performs a measurement that results in wavefunction collapse, yielding a specific eigenvalue of the measured operator \cite{bergholm2018pennylane}.
\begin{definition}
	(Expectation Values) The expectation value of an operator \(\hat{O}\) for a state \(|\Psi\rangle\) is the average value of the observable corresponding to \(\hat{O}\), given by
	\begin{equation}
		\langle \hat{O} \rangle = \langle \Psi|\hat{O}|\Psi\rangle.
	\end{equation}
\end{definition}
\begin{example}
	For a qubit in the state \(|\Psi\rangle = \alpha|\uparrow\rangle + \beta|\downarrow\rangle\), the expectation value of the Pauli-Z operator can be calculated as follows:
	\begin{align}
		\langle \hat{\sigma}_z \rangle &= \langle \Psi|\hat{\sigma}_z|\Psi\rangle, \\
		&= \left(\alpha^*\langle\uparrow| + \beta^*\langle\downarrow|\right)\hat{\sigma}_z\left(\alpha|\uparrow\rangle + \beta|\downarrow\rangle\right), \\
		&= \alpha^*\alpha\langle\uparrow|\hat{\sigma}_z|\uparrow\rangle + \beta^*\beta\langle\downarrow|\hat{\sigma}_z|\downarrow\rangle, \\
		&= |\alpha|^2\cdot 1 + |\beta|^2\cdot (-1), \\
		&= |\alpha|^2 - |\beta|^2.
	\end{align}
\end{example}
\begin{example}
	Consider the state
	\begin{equation}
		|\psi\rangle=\frac{3}{7}|00\rangle+\frac{6}{7}|01\rangle+\frac{2}{7}|10\rangle,
	\end{equation}
	and the operator \(I \otimes \sigma_z\), representing the \(\sigma_z\) operator acting on qubit B. The expectation value of \(\sigma_z\) for qubit B can be calculated as:
	\begin{align}
		\langle\sigma_z\rangle_B &= \langle\psi|\left(I \otimes \sigma_z\right)|\psi\rangle \\
		&= \left(\frac{3}{7}\right)^2\cdot 1 + \left(\frac{6}{7}\right)^2\cdot (-1) + \left(\frac{2}{7}\right)^2\cdot 1 \\
		&= \frac{9}{49} - \frac{36}{49} + \frac{4}{49} \\
		&= \frac{-23}{49}
	\end{align}
\end{example}
\begin{definition}
	(Sampling) Sampling in the context of a quantum state \(|\Psi\rangle\) refers to the process of performing a measurement on the state in a specific basis, such as the Pauli-Z basis. The outcome is probabilistic, and the state collapses to one of the eigenstates of the measured operator. Mathematically, if \(|\phi_i\rangle\) are the eigenstates of an operator \(\hat{O}\) with eigenvalues \(o_i\), then sampling \(|\Psi\rangle\) with respect to \(\hat{O}\) yields \(o_i\) with probability \(|\langle \phi_i|\Psi\rangle|^2\).
\end{definition}

\begin{example}
	Consider a qubit in the state \(|\Psi\rangle = \frac{1}{\sqrt{2}}(|\uparrow\rangle + |\downarrow\rangle)\). The probabilities for each outcome when sampling this state in the Pauli-Z basis can be calculated as follows:
	\begin{align}
		P(+1) &= \left|\langle \uparrow|\Psi\rangle\right|^2 \\
		&= \left|\frac{1}{\sqrt{2}}\right|^2 \\
		&= \frac{1}{2}, \\
		P(-1) &= \left|\langle \downarrow|\Psi\rangle\right|^2 \\
		&= \left|\frac{1}{\sqrt{2}}\right|^2 \\
		&= \frac{1}{2}.
	\end{align}
	Therefore, sampling would yield \(+1\) with a probability of \(0.5\) and \(-1\) with a probability of \(0.5\).
\end{example}
\begin{example}
	For a two-qubit state
	\begin{equation}
		|\psi\rangle = \frac{1}{\sqrt{3}}(|00\rangle + |01\rangle + |10\rangle),
	\end{equation}
	sampling the first qubit in the Pauli-Z basis would yield the following probabilities:
	\begin{align}
		P(+1) &= \left|\frac{1}{\sqrt{3}}\right|^2 + \left|\frac{1}{\sqrt{3}}\right|^2 \\
		&= \frac{1}{3} + \frac{1}{3} \\
		&= \frac{2}{3}, \\
		P(-1) &= \left|\frac{1}{\sqrt{3}}\right|^2 \\
		&= \frac{1}{3}.
	\end{align}
	Hence, \(+1\) would be obtained with a probability of \(\frac{2}{3}\) and \(-1\) with a probability of \(\frac{1}{3}\).
\end{example}
We are \textbf{concerned with sampling here}. The primary distinction between the expressions \texttt{[qml.sample(qml.PauliZ(i)) ]} and \texttt{[qml.expval(qml.PauliZ(i))]} resides in the nature of the result they yield \cite{bergholm2018pennylane} . The former returns a sampled measurement outcome, while the latter procures the expectation value. \texttt{qml.sample()} executes a measurement on the qubit in the Pauli-Z basis, causing the collapse of the wavefunction and \textbf{randomly returning either 0 or 1} based on the probabilities inherent to the qubit state. On the other hand, \texttt{qml.expval()} computes the expectation value \(\langle Z \rangle\) of the Pauli-Z operator on the qubit, \textbf{yielding the average value} we would anticipate measuring, without inducing the collapse of the wavefunction. In essence, \texttt{qml.sample()} provides a random measurement sample (0 or 1) while \texttt{qml.expval()} offers the expected average value of a measurement (a value between 0 and 1). \textbf{Sampling induces the collapse of the state, while expectation values permit further quantum processing. }
\begin{figure}[h!]
	\centering
	\begin{minipage}{.35\textwidth}
		\centering
		\includegraphics[width=\linewidth]{./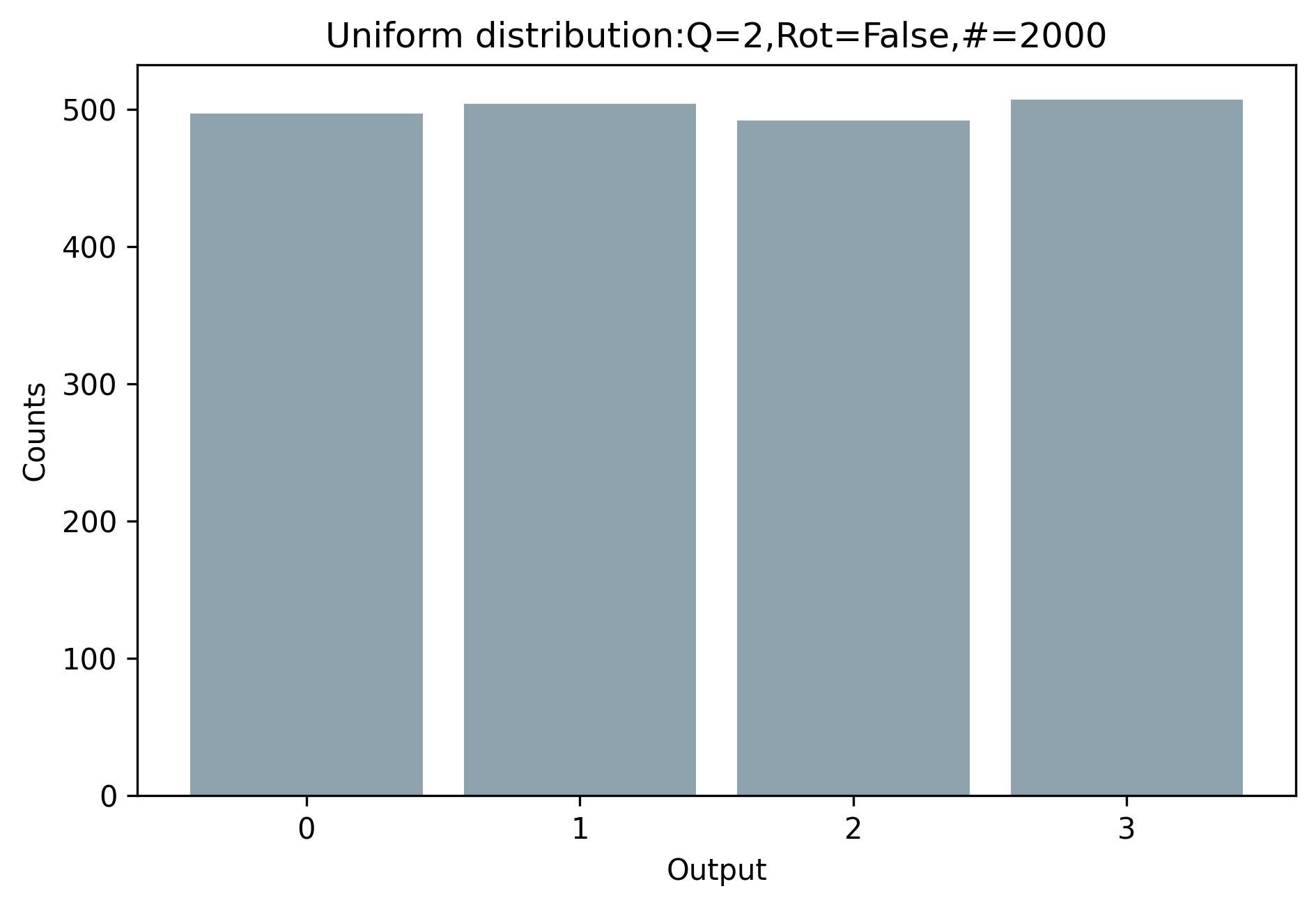}
	\end{minipage}%
	\begin{minipage}{.35\textwidth}
		\centering
		\includegraphics[width=\linewidth]{./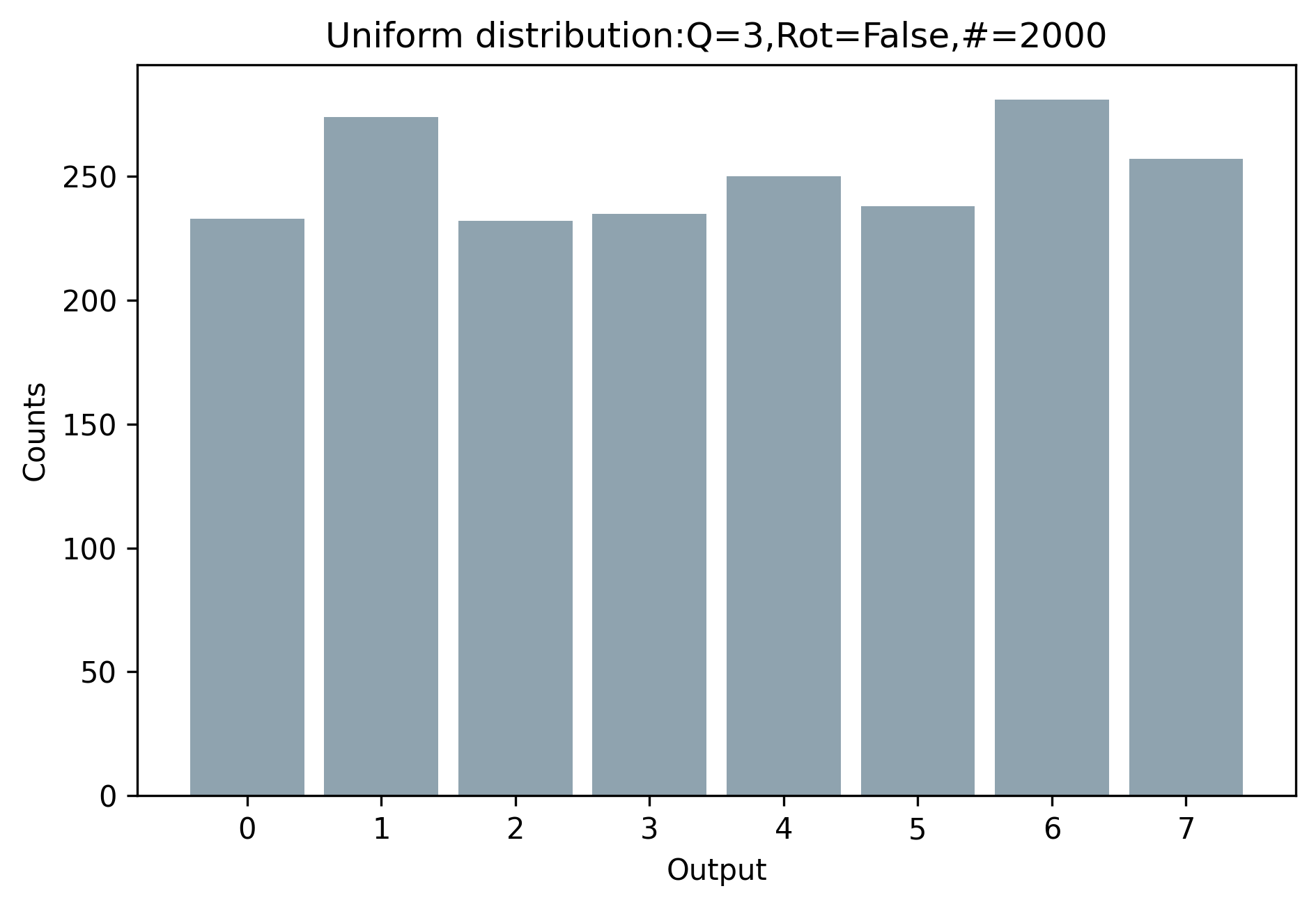}
	\end{minipage}

	\begin{minipage}{.75\textwidth}
		\centering
		\includegraphics[width=\linewidth]{./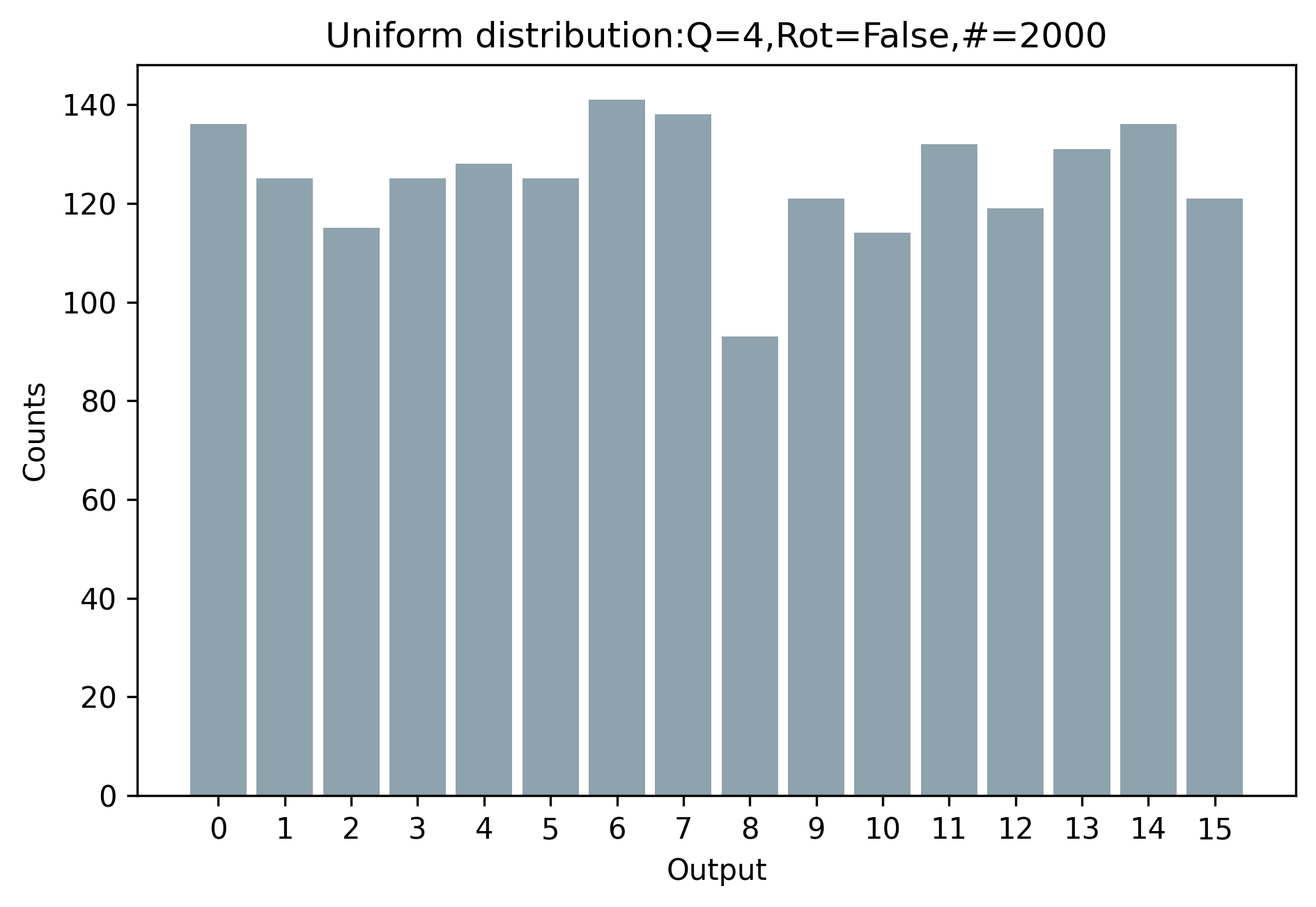}
	\end{minipage}%
	\caption[Plots of Random Rotation Angles Generated by a Quantum Circuit]{Plots of random rotation angles generated by a quantum circuit. The circuit applies a Hadamard gate to each qubit, creating a superposition of states. Each state is then measured in the Pauli-Z basis, resulting in a series of 0s and 1s (binary outcomes). These binary outcomes are converted to a binary index value, which is then normalised to a value between 0 and 1 by dividing by the total number of possible states (the space span). Each plot shows the distribution of these generated values for different numbers of qubits. The limited range of qubit string bits is due to the finite number of binary outcomes that can be generated by a given number of qubits.}
	\label{fig::uniform_qubit}
\end{figure}
The Python method \texttt{toBinaryIndex} performs the task of converting the sampling outcomes to a binary index value. It takes a list of qubit measurements as input and returns an integer value representing the binary index.
\begin{align}
	\text{value} &= \sum_{i=0}^{n-1} 2^i \times \text{sample}_i \label{eq:binary_index}
\end{align}
The binary index is then normalised to a value between 0 and 1 by dividing it by the total number of possible states, effectively spanning the state space.
\begin{align}
	\text{Normalised Value} &= \frac{\text{value}}{2^n - 1} \label{eq:normalisation}
\end{align}
The distribution of these normalised values is plotted in Figure~\ref{fig::uniform_qubit} for varying numbers of qubits. The limited range of qubit string bits is a consequence of the finite number of binary outcomes that can be generated by a given number of qubits.

\subsection{Generating Uniformly Distributed Rotations}
\label{sec:rots}
As previously highlighted, a central premise of our manuscript is the cascading of two quantum circuits, wherein the first circuit supplies the second with random rotations, thereby introducing an additional layer of stochasticity. Quantum rotations serve as the cornerstone for myriad quantum manipulations and are essential for altering quantum states. Within this article, we briefly examine the mathematical formulations of quantum rotations, specifically concentrating on rotations about the $x, y$, and $z$ axes, symbolised as $R_x, R_y$, and $R_z$ respectively.
\begin{figure}[H]
	\centering
	\begin{minipage}{.35\textwidth}
		\centering
		\includegraphics[width=\linewidth]{./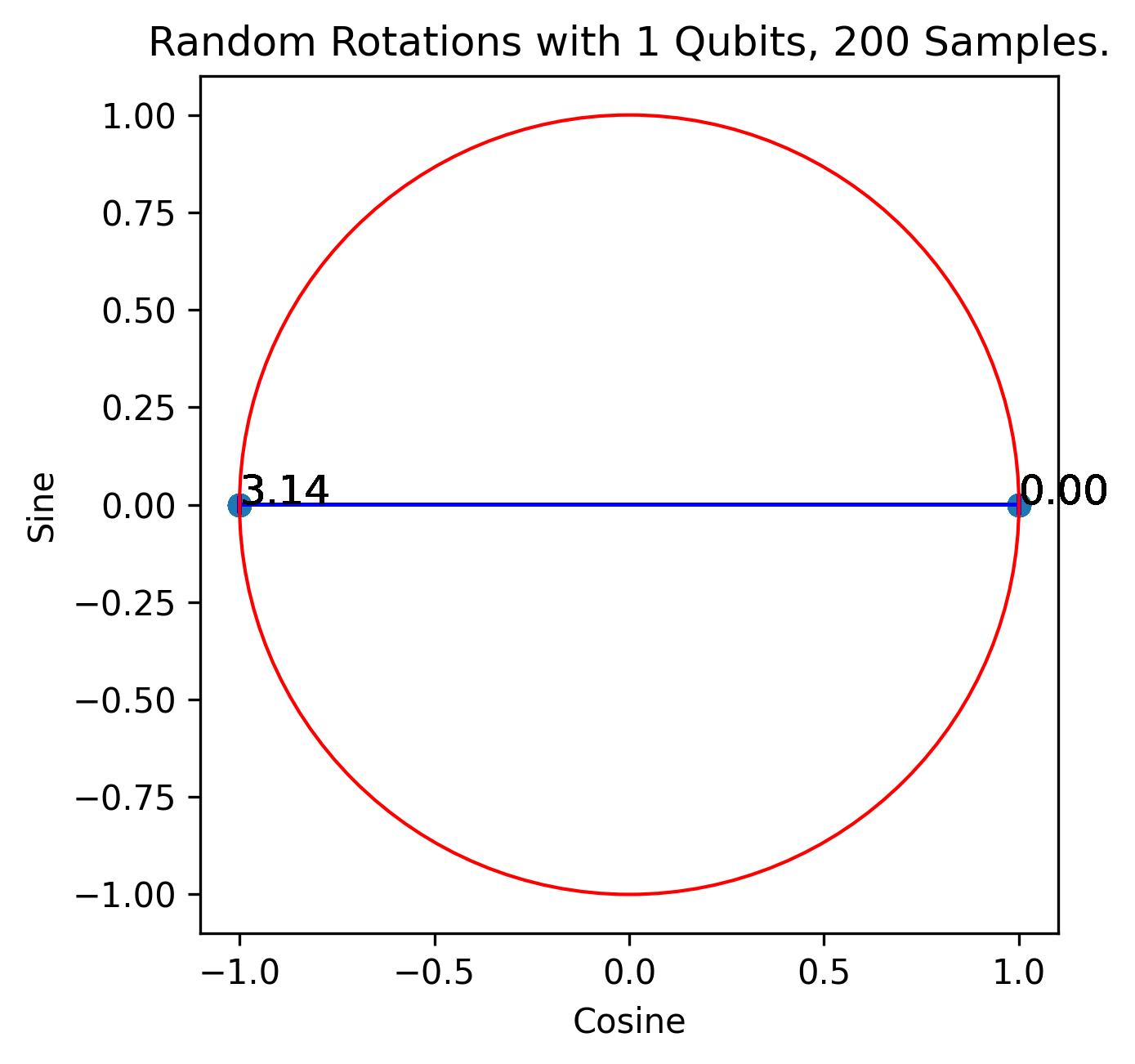}
	\end{minipage}%
	\begin{minipage}{.35\textwidth}
		\centering
		\includegraphics[width=\linewidth]{./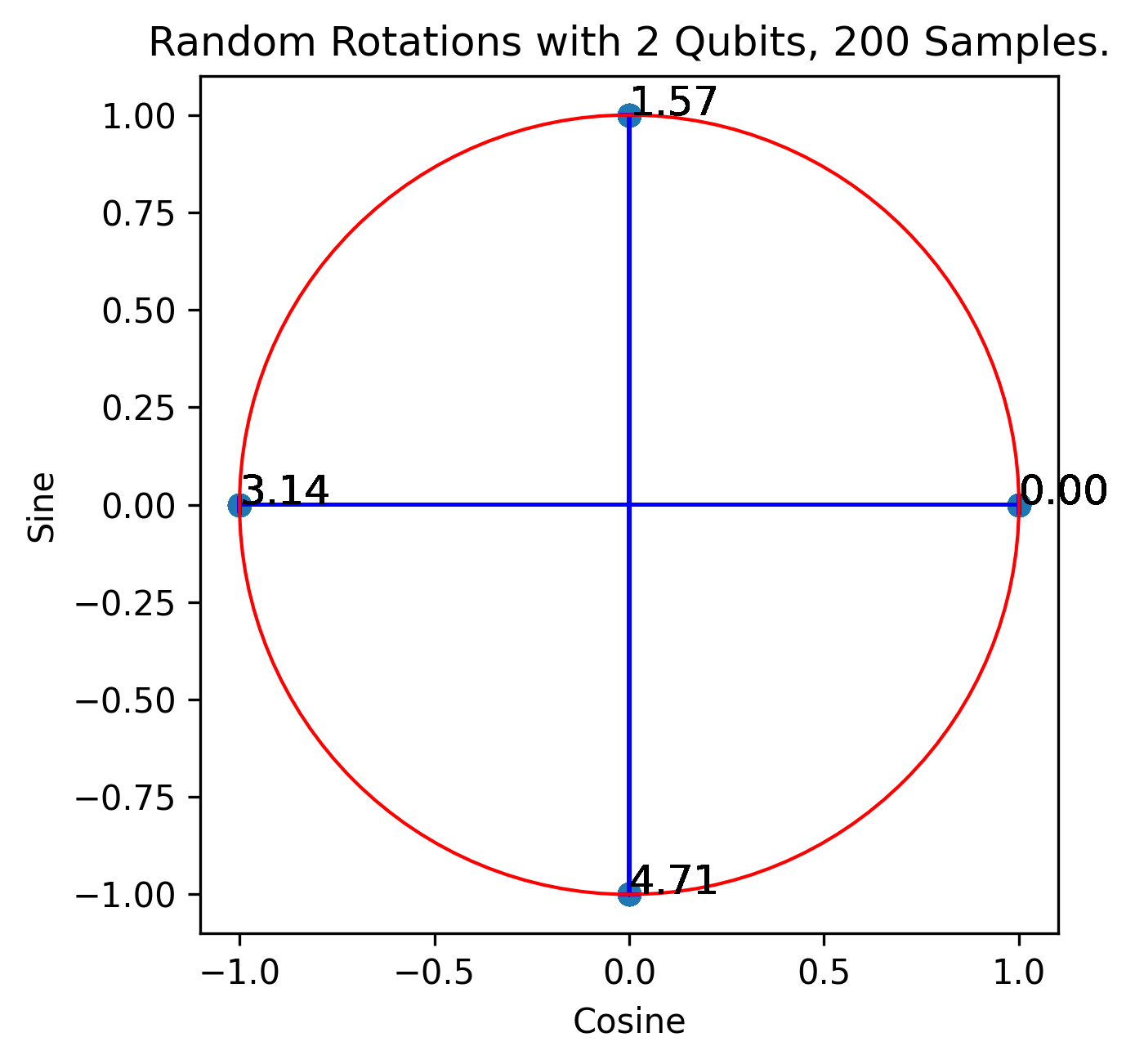}
	\end{minipage}

	\begin{minipage}{.35\textwidth}
		\centering
		\includegraphics[width=\linewidth]{./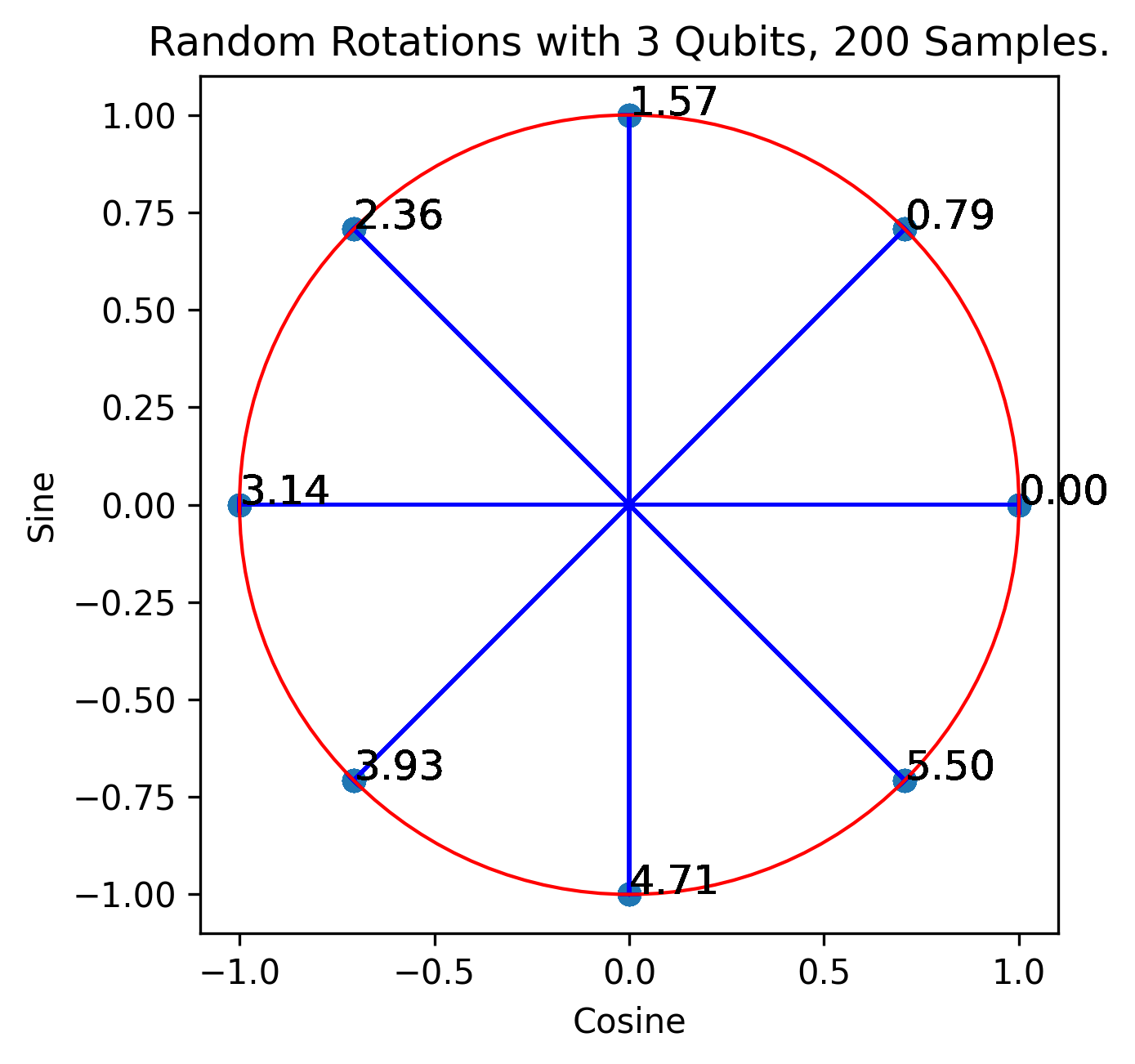}
	\end{minipage}%
	\begin{minipage}{.35\textwidth}
		\centering
		\includegraphics[width=\linewidth]{./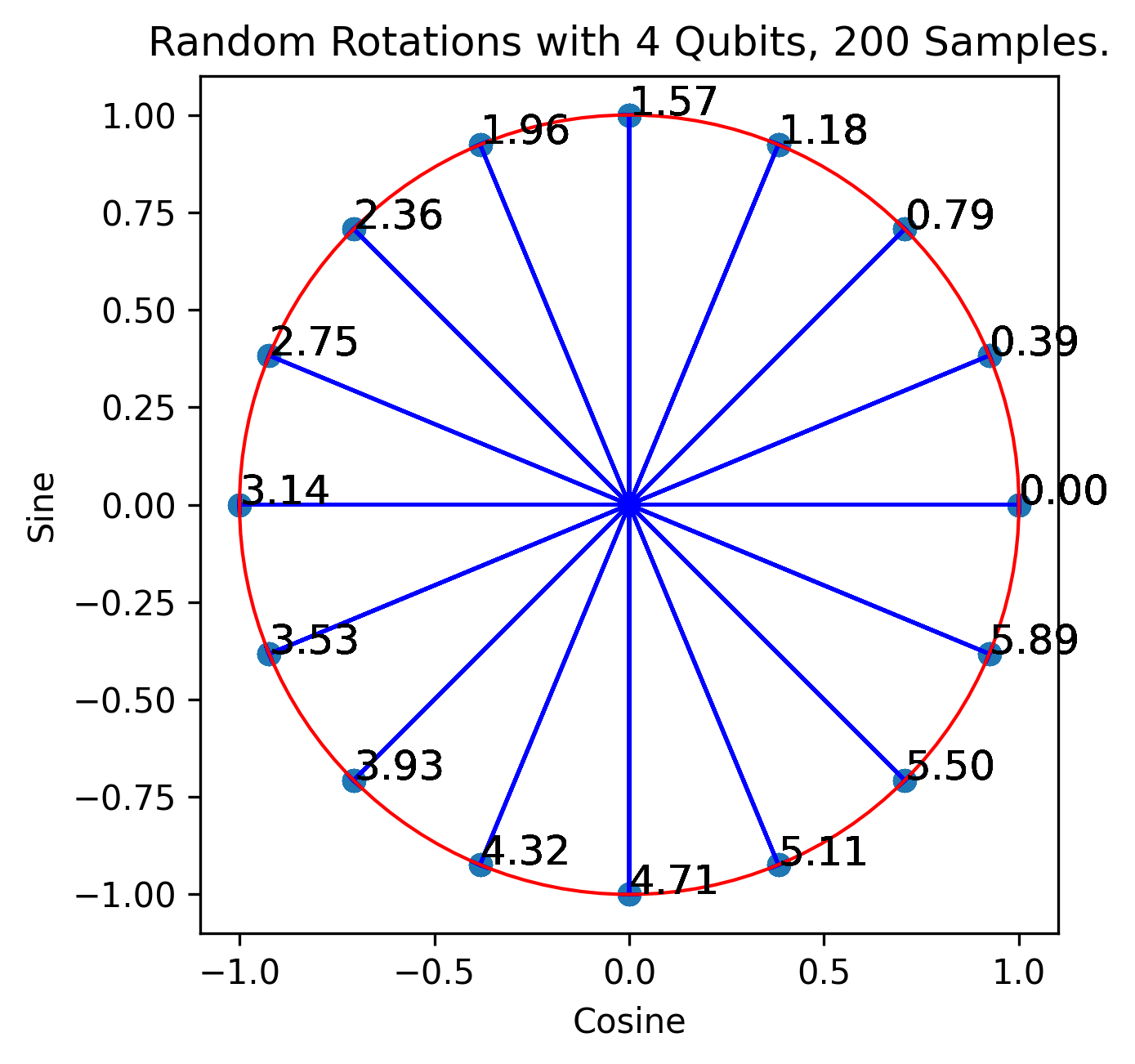}
	\end{minipage}
	\caption[Plots of Quantum Sampling Measurement Results on a Unit Circle]{Plots of random rotation angles generated by a quantum circuit. The circuit employs a Hadamard gate on each qubit to create a superposition of quantum states. Following state preparation, each qubit state is measured in the Pauli-Z basis, yielding a series of binary outcomes, represented as 0s and 1s. These outcomes are then converted into a single binary index value, which is normalized to a value between 0 and 1 by dividing by the total number of possible states. This normalized value is subsequently scaled to a rotation angle between 0 and \(2\pi\) radians. Each plot illustrates the distribution of these generated angles for varying numbers of qubits, depicted in polar coordinates with the angle represented by the position on the circle. The limited range of angles is a consequence of the finite number of binary outcomes that can be generated by a given number of qubits.}
	\label{fig:random_rotations}
\end{figure}
\begin{definition}
	(Quantum Rotation Operators) The quantum rotation operators \(R_x\), \(R_y\), and \(R_z\) are defined as:
	\begin{gather}
		R_x(\theta) = e^{-i \theta X/2} = \cos\left(\frac{\theta}{2}\right) I - i \sin\left(\frac{\theta}{2}\right) X, \\
		R_y(\theta) = e^{-i \theta Y/2} = \cos\left(\frac{\theta}{2}\right) I - i \sin\left(\frac{\theta}{2}\right) Y, \\
		R_z(\theta) = e^{-i \theta Z/2} = \cos\left(\frac{\theta}{2}\right) I - i \sin\left(\frac{\theta}{2}\right) Z,
	\end{gather}
	where \(X\), \(Y\), and \(Z\) are the Pauli matrices, \(I\) is the identity matrix, and \(\theta\) is the angle of rotation.
\end{definition}

\begin{example}
	The matrix representations of the quantum rotation operators \(R_x\), \(R_y\), and \(R_z\) are given by:
	\begin{align}
		R_x(\theta) &= \left[\begin{array}{cc}\cos \frac{\theta}{2} & -i \sin \frac{\theta}{2} \\ -i \sin \frac{\theta}{2} & \cos \frac{\theta}{2}\end{array}\right], \\
		R_y(\theta) &= \left[\begin{array}{cc}\cos \frac{\theta}{2} & -\sin \frac{\theta}{2} \\ \sin \frac{\theta}{2} & \cos \frac{\theta}{2}\end{array}\right], \\
		R_z(\theta) &= \left[\begin{array}{cc}e^{-i \frac{\theta}{2}} & 0 \\ 0 & e^{i \frac{\theta}{2}}\end{array}\right].
	\end{align}
\end{example}
\index{Python}
\begin{example}
	We consider a three-qubit quantum circuit where each qubit undergoes a Hadamard transformation followed by a general rotation characterized by three parameters: \( \theta, \phi, \lambda \).
	\begin{enumerate}
		\item \textbf{Initial State:} The initial state of the three qubits is prepared by another quantum circuit consisting solely of Hadamard gates. Mathematically, this is represented as:
		\begin{equation}
			|\psi_{\text{initial}}\rangle = (H \otimes H \otimes H) |0\rangle^{\otimes 3}.
		\end{equation}

		\item \textbf{Total Operation:} The total unitary operation acting on the initial state consists of Hadamard gates followed by general rotations on each qubit. This is given by:
		\begin{equation}
			U_{\text{total}} = (U(\theta_1, \phi_1, \lambda_1) \otimes U(\theta_2, \phi_2, \lambda_2) \otimes U(\theta_3, \phi_3, \lambda_3)) \cdot (H \otimes H \otimes H) \cdot (H \otimes H \otimes H).
		\end{equation}

		\item \textbf{Final State:} The final state of the system is then:
		\begin{equation}
			|\psi_{\text{final}}\rangle = U_{\text{total}} |\psi_{\text{initial}}\rangle.
		\end{equation}
	\end{enumerate}

	Let's consider a numerical example for the system. We'll use the following rotation angles for each qubit:
\begin{tabular}{lll}
	\( \theta_1 = \frac{\pi}{4} \) & \( \phi_1 = \frac{\pi}{2} \) & \( \lambda_1 = \pi \) \\
	\( \theta_2 = \frac{\pi}{3} \) & \( \phi_2 = \frac{\pi}{4} \) & \( \lambda_2 = \frac{\pi}{2} \) \\
	\( \theta_3 = \frac{\pi}{6} \) & \( \phi_3 = \frac{\pi}{3} \) & \( \lambda_3 = \frac{\pi}{4} \)
\end{tabular}
	The Hadamard gate \( H \) is represented as:
	\begin{equation}
		H = \frac{1}{\sqrt{2}} \begin{pmatrix}
			1 & 1 \\
			1 & -1
		\end{pmatrix}
	\end{equation}

	And a general rotation \( U(\theta, \phi, \lambda) \) is given by:
	\begin{equation}
		U(\theta, \phi, \lambda) = \begin{pmatrix}
			\cos(\frac{\theta}{2}) & -e^{i \lambda} \sin(\frac{\theta}{2}) \\
			e^{i \phi} \sin(\frac{\theta}{2}) & e^{i (\phi + \lambda)} \cos(\frac{\theta}{2})
		\end{pmatrix}
	\end{equation}

	First, we'll find the initial state \( |\psi_{\text{initial}}\rangle \) after applying the Hadamard gates:
	\begin{equation}
		|\psi_{\text{initial}}\rangle = \frac{1}{\sqrt{8}}(|000\rangle + |001\rangle + |010\rangle + |011\rangle + |100\rangle + |101\rangle + |110\rangle + |111\rangle)
	\end{equation}

	Next, we'll find the unitary operation \( U_{\text{total}} \) for the Hadamard and rotation gates:
	\begin{equation}
		U_{\text{total}} = (U(\theta_1, \phi_1, \lambda_1) \otimes U(\theta_2, \phi_2, \lambda_2) \otimes U(\theta_3, \phi_3, \lambda_3)) \times (H \otimes H \otimes H)
	\end{equation}

	Finally, we'll find the final state \( |\psi_{\text{final}}\rangle \):
	\begin{equation}
		|\psi_{\text{final}}\rangle = U_{\text{total}} |\psi_{\text{initial}}\rangle
	\end{equation}

	The evaluation of \( |\psi_{\text{final}}\rangle \) is executed solely via numerical software, notably the PennyLane quantum simulation library.

\end{example}
\subsection{Generating GRV's}
\label{sec:gauss}
In the present investigation, we solely employ the Marsaglia  \cite{Vigna2014b} polar algorithm (\ref{alg::marsaglia}) as a mechanism for generating normally distributed random numbers. While the extant literature presents a variety of enhancements or alternatives to the Marsaglia polar method, such as the one mentioned above~(\ref{eq:box_muller}), we have chosen not to explore these avenues, given the empirical efficacy of the method under consideration. It is crucial to underscore that while various studies propose potential refinements to the method, the selection of an optimal approach is intrinsically tied to the specific requirements of the application in focus.
The prospect of harnessing the uncertainty of quantum measurements for generative modelling presents an intriguing research direction \cite{romero2019variational,liu2018differentiable}. The premise, posits that it is feasible to iteratively superimpose tiny amounts of quantum noise Figure~ \ref{fig:quantum_gaussian_samples}, denoted as $\epsilon$, onto any image $X$ over $t$ timesteps, thereby transmuting $X$ into a pure Gaussian noise sample $T$. In a reciprocal manner, given that $t$ is adequately large the noise-laden sample $T$ can be reverted to the original, noise-free image $X$ through the systematic elimination of the superimposed noise.
\begin{figure}[h!]
	\centering
	\includegraphics[width=\linewidth]{./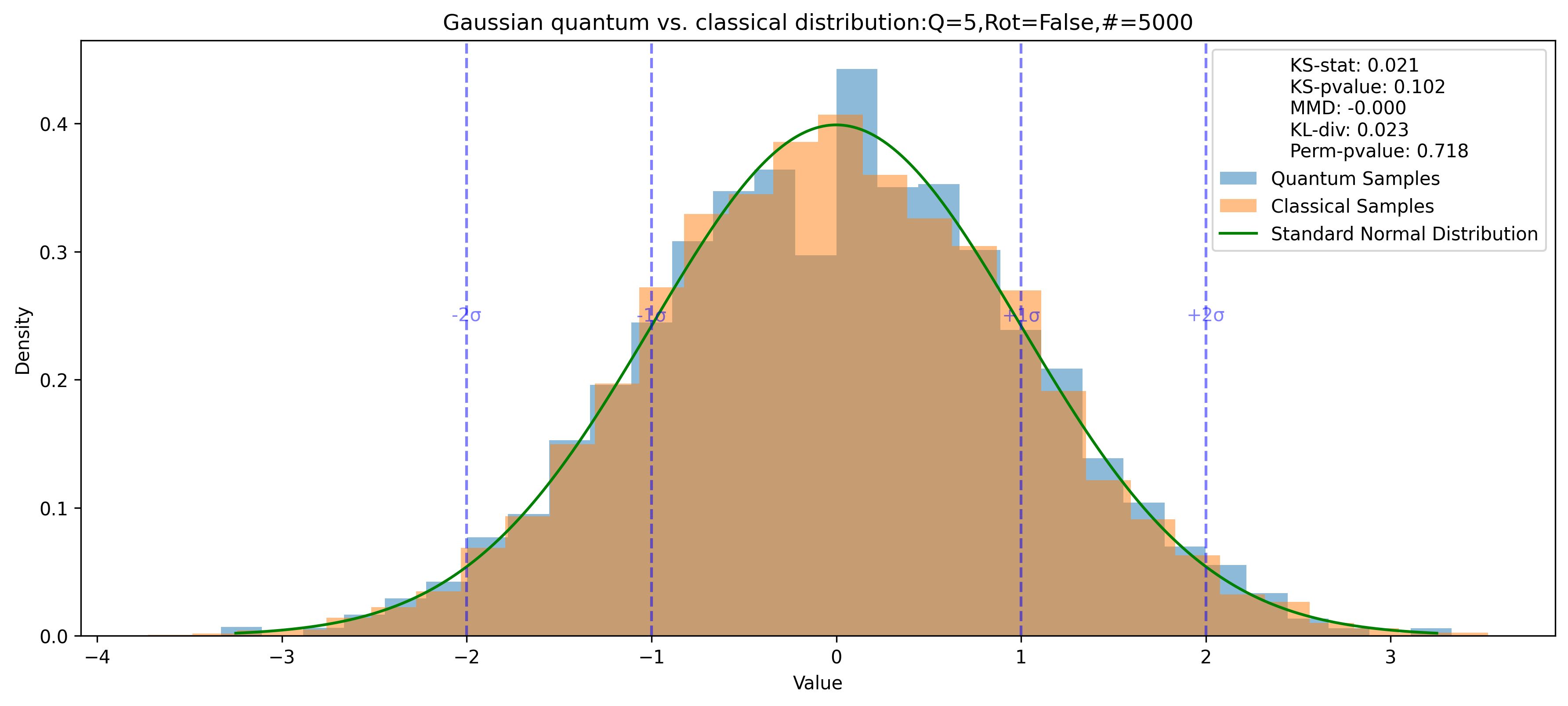}
	\caption[Plots of 2000 Gaussian samples from a quantum circuit]{This figure depicts the probability density function (PDF) of a standard normal (Gaussian) distribution, generated from 2000 samples of a quantum circuit. The red dashed lines mark positions of one, two, and three standard deviations away from the mean, in both the positive and negative directions \cite{brent2010fast}. The labels '-1\(\sigma\)', '+1\(\sigma\)', '-2\(\sigma\)', '+2\(\sigma\)', '-3\(\sigma\)', '+3\(\sigma\)' represent points one, two, and three standard deviations away from the mean, respectively. This visualisation assists in understanding the empirical rule for a standard normal distribution, also known as the 68-95-99.7 rule.}
	\label{fig:quantum_gaussian_samples}
\end{figure}

In our software stack QonFusion, the \texttt{QuantumRandomGenerator} class is designed to generate a uniform random distribution utilizing the quantum circuits discussed above. The initialization of the class sets up a quantum device, defines the number of qubits, the number of layers, and the quantum circuit. An option to use rotations is also provided, which can be turned on or off with the \texttt{useRot} flag. If the \texttt{useRot} flag is set, the generator employs the \texttt{QuantumRandomRotationGenerator} to produce rotation angles that are subsequently fed into the  second quantum circuit. This circuit applies Hadamard gates to each qubit, creating a superposition of states, and then applies the rotations (if the \texttt{useRot} flag is set). Each qubit is then measured in the Pauli-Z basis. The measurements produce a binary string, which is converted to an index using the \texttt{toBinaryIndex} method. This method iterates over the binary outcomes, treating each binary value as a digit in a binary number and summing the corresponding powers of 2 to produce a decimal number. The \texttt{Q\_uniform\_rnd} method runs the process, generating a random number according to a uniform distribution. If the \texttt{isIndex} flag is set, it returns the binary index directly. Otherwise, it normalizes the index to a value between 0 and 1 by dividing by the total number of possible states (the space span), effectively producing a continuous uniform random number. This process exemplifies the potential of quantum computing in generating truly random numbers and highlights the flexibility of quantum circuits in constructing complex probability distributions.

\section{Results}
To test our quantum generative modeling techniques, we ran experiments both on a simulator using classical computing resources and on actual quantum hardware from IBM with 20 qubits. Running on the simulator allows us to validate the theoretical performance of the quantum circuits before real-world execution. We then benchmarked the same experiments on IBM's quantum computer to compare the practical results and quantify the error rates. While the simulator produces idealized noise-free outputs, the real hardware introduces errors from effects like decoherence and gate imperfections. By conducting experiments under both simulated and real conditions, we can verify the validity of our quantum approach and characterize the degree of deviation on current noisy devices. Analyzing the discrepancies between these results is crucial for determining the readiness of near-term quantum computers for practical generative modeling. Our evaluation provides an empirical demonstration of running quantum generation algorithms and contrasts the accuracy achieved on simulators versus real quantum chips.
\subsection{Statistical Evaluation of Quantum-Generated Distributions}
Our investigation confirms that quantum circuits are capable of generating both continuous uniform and Gaussian distributions. When employed as an alternative to classical random number generators, quantum noise effectively corrupts images for diffusion processes. The integrity of the quantum-generated Gaussian distribution is corroborated through a series of statistical tests \cite{koliada2014statistical, lorek2019testing,vattulainen1993hidden}, including a statistical permutation test, as summarized in Table (\ref{tbl:ks_test}).

\begin{table}[h!]
	\centering
	\begin{tabular}{|c|c|}
		\hline
		\textbf{Test} & \textbf{Value} \\
		\hline
		KS Statistic & 0.052 \\
		KS P-Value & 0.124 \\
		MMD & 0.001 \\
		KL Divergence & 0.030 \\
		Permutation test & 0.718 \\
		\hline
	\end{tabular}
	\caption[Statistical tests]{The quantum random Gaussian generator underwent a battery of statistical tests, juxtaposing the quantum-generated samples with classical Gaussian samples. The Kolmogorov-Smirnov (KS) test yielded a statistic of \(S = 0.052\) and a \(p\)-value of \(p = 0.124\). Given that the \(p\)-value exceeds the conventional threshold of \(0.05\), the null hypothesis, which posits that the quantum and classical samples are drawn from the same continuous distribution, could not be rejected. This suggests that the quantum samples do not significantly deviate from their classical Gaussian counterparts. The Maximum Mean Discrepancy (MMD), a measure that quantifies the difference between the mean embeddings of two distributions in a Reproducing Kernel Hilbert Space (RKHS), recorded a value of \(MMD = 0.001\). This minimal MMD value indicates an inconsequential difference between the mean embeddings of the quantum and classical Gaussian distributions in the RKHS. The Kullback-Leibler (KL) divergence, which measures how much the quantum distribution diverges from the classical Gaussian distribution, was \(D_{KL} = 0.030\). This modest KL divergence value corroborates that the quantum distribution closely approximates the classical Gaussian distribution. Finally, a statistical permutation test was conducted, with high values suggesting that the quantum and classical samples are indistinguishable. These results are consistent with the findings from the KS test and MMD, further reinforcing the congruence between the quantum and classical Gaussian distributions.}
	\label{tbl:ks_test}
\end{table}
\subsection{Quantum-Augmented Gaussian Noise in Forward Diffusion Processes}
In this study, we conducted experiments with a variable number of qubits (4, 5, 6), different PennyLane devices (e.g., `qml.device('default.qubit')`), and varying numbers of shots. In PennyLane, as is the case with many quantum simulators, the number of shots dictates the number of times the circuit is executed.
\begin{figure}[h!]
	\centering
	\includegraphics[width=\linewidth]{./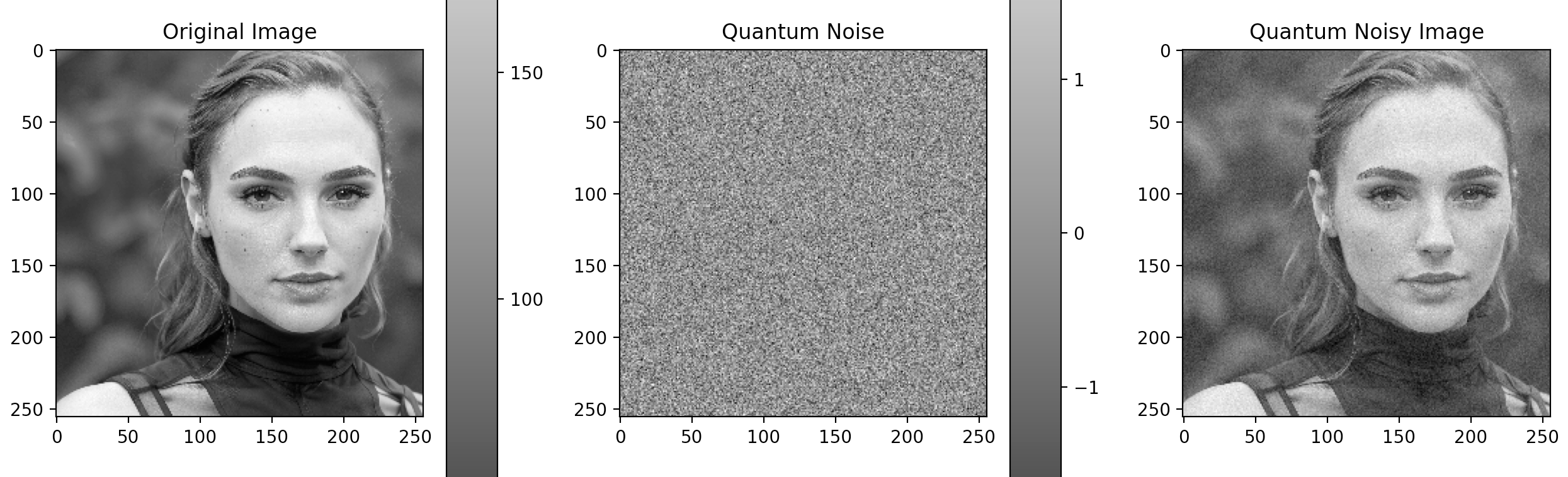}
	\caption[Quantum-Enhanced Gaussian Noise in Forward Diffusion Models]{This figure illuminates the role of Gaussian noise in forward diffusion algorithms. These algorithms convert images into pure Gaussian noise through the sequential addition of small noise perturbations. While the reverse operation, which utilizes a U-Net neural architecture to remove the noise, is not examined in this study, it plays a crucial role in reverting the image to its original, noise-free state. The efficacy of stochastic diffusion (SD) models is largely dependent on the successful integration of Gaussian noise, an aspect that could be further optimized by employing quantum random number generation (QRNG).}
	\label{fig:quantum_gaussian_2dimage}
\end{figure}
\subsection{Quantum-Enhanced Gaussian Noise in Brownian Motion Simulations}
\label{res:brownian}
This section provides an illustration of the utility of our Quantum Gaussian generator in simulating Brownian motion, serving as a quality metric for our QRNG. In conventional Brownian motion, any deviation of the Gaussian distribution's mean from zero disrupts the motion's fidelity to true Brownian behaviour (Theorem (\ref{thm:brownian})); we prove this in Proof (\ref{prf:brownian}) . Our QRNG, however, accurately encapsulates the fundamental characteristics of Brownian motion.
\begin{figure}[h!]
	\centering
	\includegraphics[width=\linewidth]{./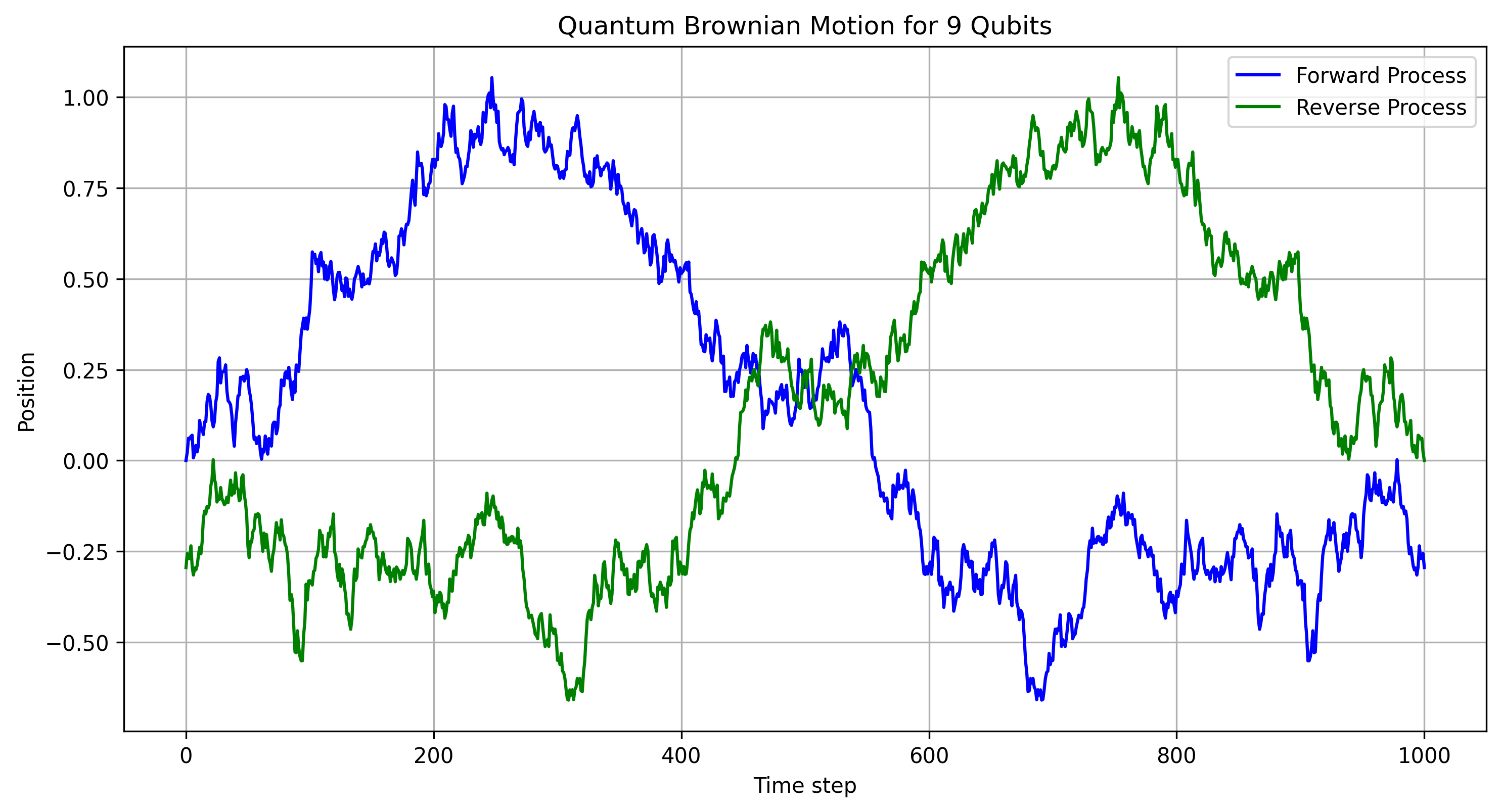}
	\caption[Quantum-Enhanced Gaussian Noise in Brownian Motion Simulations for 9 Qubits]{This figure showcases the efficacy of our Quantum Gaussian generator in simulating Brownian motion. The plot serves as a quality metric for our Quantum Random Number Generator (QRNG). Unlike conventional Brownian motion, where any deviation from a zero mean in the Gaussian distribution disrupts the fidelity to true Brownian behavior, our QRNG maintains this fidelity. The plot illustrates that the generated Brownian motion closely adheres to the theoretical expectations, thereby validating the utility of quantum-enhanced Gaussian noise in stochastic simulations.}
	\label{fig:quantum_gaussian_brownian}
\end{figure}
\begin{definition}
	Standard Brownian motion \( W(t) \) is a stochastic process that satisfies the following properties:
\begin{tabular}{ll}
	1. & \( W(0) = 0 \) \\
	2. & \( W(t) \) has independent increments. \\
	3. & \( W(t) - W(s) \sim \mathcal{N}(0, t-s) \) for \( 0 \leq s < t \). \\
	4. & \( W(t) \) is continuous in \( t \).
\end{tabular}
\end{definition}
\begin{Theorem}
	Any deviation of the Gaussian distribution's mean from zero disrupts the motion's fidelity to true Brownian behavior.
	\label{thm:brownian}
\end{Theorem}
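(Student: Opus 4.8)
The plan is to make the informal claim precise by pinning it to a concrete defining property of standard Brownian motion that fails the moment the driving Gaussian acquires a nonzero mean, and to show the failure is generic, i.e.\ depends only on $\mu\neq 0$ and not on fine details of the simulation. I would model the simulated path as a discrete-time random walk $W(t_n)=\sum_{k=1}^{n}\xi_k$ with $t_n=n\,\Delta t$ and i.i.d.\ increments $\xi_k\sim\mathcal{N}(\mu\,\Delta t,\;\Delta t)$ --- the natural discretisation underlying the Euler update used in the simulation, whose $\mu=0$ case reproduces property~3 of the definition of standard Brownian motion, namely $W(t)-W(s)\sim\mathcal{N}(0,t-s)$.

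First I would compute the first moment. By linearity, $\mathbb{E}[W(t_n)]=\sum_{k=1}^{n}\mathbb{E}[\xi_k]=\mu\,t_n$, so in the continuum limit $\mathbb{E}[W(t)]=\mu t$. For standard Brownian motion, property~3 with $s=0$ together with $W(0)=0$ forces $\mathbb{E}[W(t)]=0$ for every $t$. Hence for any $\mu\neq 0$ there is a $t>0$ with $\mathbb{E}[W(t)]=\mu t\neq 0$, contradicting the definition; equivalently the increment law is $\mathcal{N}(\mu(t-s),\,t-s)$ rather than $\mathcal{N}(0,t-s)$, so property~3 fails outright. This is the core argument.

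Second, to connect with the irreversibility remark of Section~\ref{sec:brownian}, I would record two structural corollaries. (i) The martingale property is lost: $\mathbb{E}[W(t)\mid\mathcal{F}_s]=W(s)+\mu(t-s)\neq W(s)$, so by L\'evy's characterisation the process cannot be a standard Brownian motion. (ii) Time-reversal symmetry is broken: for standard Brownian motion on $[0,T]$ the reversed path $\{W(T-t)-W(T)\}_{t\in[0,T]}$ has the same law as $\{W(t)\}_{t\in[0,T]}$, whereas with drift the forward path has mean slope $+\mu$ and the reversed path mean slope $-\mu$, laws that differ whenever $\mu\neq 0$ --- this is exactly the ``irreversibility'' observed empirically. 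Any one of the three observations proves the theorem; I would present the moment computation as the proof and the other two as remarks.

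The main obstacle is definitional rather than computational: ``fidelity to true Brownian behavior'' must be tied to a specific property, and one must handle the scaling of $\mu$ in the continuum limit with care --- a per-step mean held fixed as $\Delta t\to 0$ actually makes the rescaled path diverge (an even stronger failure), so I would fix the scaling $\mathbb{E}[\xi_k]=\mu\,\Delta t$ that yields the clean limit $W(t)=\mu t+B(t)$ with $B$ standard, and note explicitly that the deviation from Brownian behavior is monotone in $|\mu|$ and vanishes precisely at $\mu=0$.
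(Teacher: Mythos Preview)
Your proposal is correct, and its core step --- observing that the increment law becomes $\mathcal{N}(\mu(t-s),\,t-s)$, so property~3 of the definition of standard Brownian motion fails whenever $\mu\neq 0$ --- is exactly the paper's proof, which simply runs through the four defining properties and notes that only the third is violated. Your discretisation model, the martingale remark via L\'evy's characterisation, and the time-reversal argument are all sound and connect nicely to the irreversibility observation in Section~\ref{sec:brownian}, but they are elaborations on top of the same one-line verification rather than a genuinely different route.
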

\begin{proof}
	Let's consider a Gaussian distribution with a mean \( \mu \neq 0 \) and variance \( \sigma^2 \). If this distribution is used for the increments in Brownian motion, then the increment \( W(t) - W(s) \) would be distributed as \( \mathcal{N}(\mu, t-s) \).
	We examine each of the properties of standard Brownian motion to see if they still hold:
\begin{tabular}{ll}
	1. & \( W(0) = 0 \) still holds. \\
	2. & \( W(t) \) would still have independent increments. \\
	3. & \( W(t) - W(s) \) would now be \( \mathcal{N}(\mu, t-s) \), which violates the standard definition requiring a zero mean. \\
	4. & \( W(t) \) would still be continuous in \( t \).
\end{tabular}
	The violation occurs at the third property, thus proving that any deviation from a zero mean disrupts the fidelity to true Brownian motion.
	\label{prf:brownian}
\end{proof}
\section{Discussion}
This study presents an innovative yet straightforward strategy for generating Gaussian random variables through the utilisation of non-parametric quantum circuits. Empirical analyses substantiate a close concordance between quantum and classical Gaussian distributions, thereby affirming the efficacy of the quantum approach. Further experiments underscore the applicability of quantum-generated Gaussian noise in diffusion-centric methods such as SD and Brownian Motion.

Although the approach still necessitates classical post-processing, the replacement of pseudo-random numbers with quantum-generated random bits marks a noteworthy advancement towards more robust generative models. This hybrid methodology dispenses with the laborious task of parametric optimisation within quantum circuits. Nevertheless, the full realisation of quantum generative modelling continues to pose considerable challenges, particularly in the scaling of quantum convolutional layers. Despite these hurdles, the present study lays the groundwork for an auspicious avenue of exploration.

\section{Acknowledgements.}

\bibliographystyle{alphaurl}
\onecolumn
\listoffigures
\listoftables
\bibliography{refs}

\renewcommand\indexname{Index}
\end{document}